\newtheorem{theorem}{Theorem}
\newtheorem{proposition}[theorem]{Proposition}
\newtheorem{lemma}[theorem]{Lemma}
\newtheorem{corollary}[theorem]{Corollary}
\newtheorem{example}{Example}
\newtheorem{definition}{Definition}
\newcommand{\avg}{\operatornamewithlimits{avg}}
\newcommand{\rev}{{\bf{r}}}
\renewcommand{\P}{\mathbb{P}}
\renewcommand{\O}{\mathbb{O}}
\newcommand{\putaway}[1]{}
\newcommand{\orat}{\textsc{$\mathbbmss{O}$-$\mathbbmss{rating}$}}
\newcommand{\prat}{\textsc{$\mathbbmss{P}$-$\mathbbmss{rating}$}}
\renewcommand{\deg}{\text{\it deg}}
\newcommand{\eval}{\text{\it eval}}
\begin{document}
\title{A network-based rating system \\ and its resistance to bribery}
\author[1]{Umberto Grandi}
\author[2]{Paolo Turrini}

\affil[1]{IRIT, University of Toulouse} 
\affil[2]{Department of Computing, Imperial College London}


\date{} 

\maketitle

\begin{abstract}

We study a rating system in which a set of individuals (e.g., the customers  of a restaurant) evaluate a given service (e.g, the restaurant), with their aggregated opinion determining the probability of all individuals to use the service and thus its generated revenue. We explicitly model the influence relation by a social network,  with individuals being influenced by the evaluation of their trusted peers. 
On top of that we allow a malicious service provider (e.g., the restaurant owner) to bribe some individuals, i.e., to invest a part of his or her expected income to modify their opinion, therefore influencing his or her final gain. 
We analyse the effect of bribing strategies under various constraints, and we show under what conditions the system is bribery-proof, i.e., no bribing strategy yields a strictly positive expected gain to the service provider.


\end{abstract}

\section{Introduction}

Imagine to be the owner of a new and still relatively unknown restaurant.
The quality of food is not spectacular and the customers you have seen so far are only limited to a tiny number of friends of yours.
Your account on Tripadvisor$^{\tiny \textregistered}$ has received no review and your financial prospects look grim at best.
There is one easy solution to your problems: you ask your friends to write an enthusiastic review for you, in exchange for a free meal.
After this, Tripadvisor$^{\tiny \textregistered}$ lists your restaurant as excellent and the number of customers, together with your profit, suddenly florishes.

Systems such as Tripadvisor$^{\tiny \textregistered}$, where a small proportion of customers writes reviews and influences a large number of potential customers, are not {\em bribery-proof}: each restaurant owner - or the owner of whichever service - is able to offer a compensation - monetary or not - in exchange for positive evaluation, having an impact on the whole set of potential customers. Tripadvisor$^{\tiny \textregistered}$ is based on what we call "Objective Rating", or $\orat$: individual evaluations are aggregated into a single figure, which is seen by, and thus influences, every potential customer.

What we study in this paper is a system in which each individual {\em only} receives the evaluation given by the set of trusted peers, his or her friends, and only this aggregated opinion influences his or her decision. This is what we call "Personalised Rating", or $\prat$, which can be seen a generalisation of $\orat$ in which influence has a complex network-structure.
So, while in the case of $\orat$ the restaurant owner knows exactly how influence flows among the customers, this might not be the case with $\prat$.

{{\bf Our contribution}} We analyse the effect of bribing strategies in the case of $\orat$ and $\prat$ under various constraints, depending on the presence of customers who do not express any opinion and the knowledge of the network by the service provider: the exact network is known, the network is known but not the customers' exact position, the network is completely unknown. 
We show under what conditions the system is bribery-proof, i.e., there is no bribe yielding a strictly positive expected gain to the  service provider, and we provide algorithms for the computation of (all) optimal bribing strategies when they exist.

Intuitively, being able to know and bribe influential customers is crucial for guaranteeing a positive expected reward of a bribing strategy. However, while with large populations of non-voters "random" bribes can still be profitable, the effect of $\prat$ is largely different from that of $\orat$ and, as we show, the expected profit in the former can be severely limited and drops below zero in all networks, under certain (mild) conditions on the cost of bribes.

Our study can be applied to all situations in which individuals influence one another in the opinion they give and bribery can have a disruptive role in determining collective decisions.

{{\bf Related research lines}} Our approach relates to several research lines in artificial intelligence, game theory and (computational) social choice \cite{HandbookCOMSOC2015}.

\begin{description}

\item[Network-based voting and mechanism design] 
We study social networks in which individuals' local decisions can be manipulated 
to modify the resulting global properties. A similar approach is taken by \citet{DBLP:journals/fuin/AptM14} and \citet{DBLP:journals/logcom/SimonA15}, which study the changes on a social network needed to make a certain product adopted among users.  
Further contributions include rational secret sharing and multi-party computation \cite{abraham06distributed}, the strategic manipulation of peer reviews \cite{DBLP:conf/ijcai/KurokawaLMP15}, and the growing literature on voting in social networks \cite{ConitzerMSS2012,Salehi-AbariAAMAS2014,ElkindWINE2014,TsangEtAlAAMAS2015,ProcacciaEtAlIJCAI2015}.

\item[Lobbying and Bribery] Our framework features an external agent trying to influence individual decisions to reach his or her private objectives. Lobbying in decision-making is an important problem in the area of social choice, from the seminal contribution of \cite{NBERw6589} to more recent studies in multi-issue voting \cite{ChristianEtAl2007}. 
Lobbying and bribery are also established concepts in computational social choice, with their computational complexity being analysed extensively
\cite{FaliszewskiEtAlJAIR2009,DBLP:conf/aldt/BaumeisterER11,BredereckJAIR2014,BredereckEtAlJAIR2016}.

\item[Reputation-based systems] We study the aggregation of possibly insincere individual evaluations by agents that can influence one another through trust relations.
In this sense ours can be seen as a study of reputation in Multi Agent Systems, which has been an important concern of MAS for the past decades \cite{conte-paolucci,SabaterEtAlAIR2005,GarcinEtAl2009}. 
In particular, our framework treats reputation as a manipulable piece of information, not just a static aggregate of individual opinions, coherently with the work of \citet{DBLP:journals/advcs/ContePS08} and \citet{DBLP:journals/air/PinyolS13}.

\end{description}

{{\bf Paper structure}} Section~\ref{sec:basic} presents the basic setup,  introducing $\orat$, $\prat$ and bribing strategies. Section~\ref{sec:objective} focusses on $\orat$, studying its bribery-proofness under various knowledge conditions. 
Section~\ref{sec:personalised} evaluates $\prat$ against the same knowledge conditions. In Section \ref{sec:strategyproof} we compare the two systems, taking the cost of bribery into account.
We conclude by summarising the main findings and pointing at future research directions (Section~\ref{sec:conclusions}).

\section{Basic setup}\label{sec:basic}

In this section we provide the basic formal definitions.

\subsection{Restaurant and customers}

Our framework features an object $r$, called {\em restaurant}, being evaluated by a finite non-empty set of individuals $C=\{c_1, \ldots, c_n\}$, called \emph{customers}.  Customers are connected by an undirected graph 
$E \subseteq C \times C$, called the \emph{customers network}. Given $c\in C$ we call $N(c)=\{x\in C \mid (c,x)\in E\}$ the {\em neighbourhood} of $c$, always including $c$ itself.


Customers concurrently submit an \emph{evaluation} of the restaurant, drawn from a set of values $\text{\it Val}\subseteq [0,1]$, together with a distinguished element $\{*\}$, symbolising no opinion. 
Examples of values are the set~$[0,1]$ itself, or a discrete assignment of 1 to 5 stars, as common in online rating systems.
We make the assumption that $\{0,1\}\subseteq \text{\it Val}$ and that $\text{\it Val}$ is closed under the operation $\min\{1, x+y\}$ for all $x,y\in \text{\it Val}$. 
The vast majority of known rating methods can be mapped onto the $[0,1]$ interval and analysed within our framework.

We represent the evaluation of the customers as a function $\mathit{eval}: C \to \text{\it Val}\cup \{*\}$ and define $V \subseteq C$ as the subset of customers that expresses an evaluation over the restaurant, i.e., $V=\{ c\in C \mid \eval(c)\not = *\}$. We refer to this set as the set of \emph{voters} and we assume it to be always non-empty, i.e., there is at least one customer that expresses an evaluation.
%

\subsection{Two rating systems}

In online rating systems such as Tripadvisor$^{\tiny \textregistered}$ every interested customer can see - and is therefore influenced by - (the average of) what the other customers have written. We call this method $\orat$, which stands for \emph{objective rating}. 

Given an evaluation function $\eval$ of a restaurant, the associated $\orat$ is defined as follows:

$$ \orat(\eval) = {\avg_{c\in V}} \, \eval (c)  $$

\noindent 
Where $\avg$ is the average function across real-valued $\mathit{eval}(c)$, disregarding $*$. We omit $eval$ when clear from the context.

$\orat$ flattens individual evaluations into a unique objective aggregate, the rating that a certain restaurant is given. 
What we propose is a refinement of $\orat$, which takes the network of influence into account. In this system customers are {\em only} interested in the evaluation of other customers they can trust, e.g., their friends.
We call our method $\prat$, which stands for \emph{personalised rating}. It is defined for a pair customer-evaluation $(c,\eval)$ as follows:

$$\prat(c,\eval)= {\avg_{k\in N(c)\cap V} \eval(k) } $$

\noindent
So the $\prat(c,eval)$ calculates what customer $c$ comes to think of the restaurant, taking the average of the opinions of the customers $c$ is connected to.
Again we omit $eval$ whenever clear from the context.

Observe that in case a customer has no connection with a voter, then $\prat$ is not defined. To facilitate the analysis we make the technical assumption that \emph{each customer is connected to at least one voter}. Also observe that when $E=C\times C$, i.e., in case the network is complete and each individual is influenced by each other individual, then for all $c\in C$ and $\eval$ we have that $\prat(c, \eval)=\orat(\eval)$.


\subsection{Utilities and strategies}

We interpret a customer evaluation as a measure of his or her \emph{propensity} to go to the restaurant. We therefore assume that the utility that a restaurant gets is proportional to its rating. To simplify the analysis 
we assume a factor 1 proportionality.

{\bf The case of $\orat$}.
For the $\orat$, we assume that the initial utility $u^{0}$ of the restaurant is defined as:
$$u^0_\O = |C| \orat (\eval).$$

Intuitively, the initial utility amounts to the number of customers that actually go to the restaurant, weighted with their (average) predisposition.

At the initial stage of the game, the restaurant owner receives $u^0$, and can then decide to invest a part of it to influence a subset of customers and improve upon the initial gain. We assume utility to be fully transferrable and, to facilitate the analysis, that such transfers translate directly into changes of customers' predispositions.

\begin{definition}\label{def:strategy}
A strategy is a function $\sigma: C \to \text{\it Val}$ such that $\sum_{c\in C} \sigma(c) \leq u^0$. 
\end{definition}

\noindent

Definition~\ref{def:strategy} imposes that strategies are \emph{budget balanced}, i.e., restaurants can only pay with resources they have. 

Let $\Sigma$ be the set of all strategies.
We denote $\sigma^{0}$ the strategy that assigns $0$ to all customers and we call \emph{bribing strategy} any strategy that is different from $\sigma^{0}$. After the execution of a bribing strategy, the evaluation is updated as follows:

\begin{definition}
The evaluation $\eval^\sigma(c)$ after execution of $\sigma$ is  $\mathit{eval}^{\sigma}(c) =  \min \{1,\eval(c)+\sigma(c)\}$, where $*+\sigma(c)=\sigma(c)$, if $\sigma(c) \neq 0$, and $*+\sigma(c)=*$, if $\sigma(c) = 0$.
\end{definition}

\noindent
In this definition we are making the assumption that the effect of bribing a non-voter to vote is equivalent to that of bribing a voter that had a 0-level review, as, intuitively, the individual has no associated predisposition to go to the restaurant.

A strategy is called \emph{efficient} if $\sigma(c)+\eval(c)\leq 1$ for all $c\in C$. 
Let $B(\sigma)=\{c\in C\mid \sigma(c)\not =0\}$ be the set of bribed customers. Let $V^\sigma$ be the set of voters after the execution of~$\sigma$. 
Executing $\sigma$ induces the following change in utility:

$$u_\O^{\sigma} = |C|\orat(\eval^{\sigma}) - \sum_{c\in C}\sigma(c).$$

\noindent
Intuitively,  $u_\O^{\sigma}$ is obtained by adding to the initial utility of the restaurant the rating obtained as an effect of the money invested on each individual minus the amount of money spent.

We define the revenue of a strategy $\sigma$ as the marginal utility obtained by executing it:

\begin{definition}
Let $\sigma$ be a strategy.
The \emph{revenue} of $\sigma$ is defined as $\rev_\O(\sigma)=u_\O^\sigma-u^0$.
We say that $\sigma$ is \emph{profitable} if $\rev_\O(\sigma)>0$.
\end{definition}

\noindent
Finally, we recall the standard notion of dominance:

\begin{definition}
A strategy $\sigma$ is \emph{weakly dominant} if $u_\O^\sigma\geq u_\O^{\sigma'}$ for all $\sigma'{\in} \Sigma$. It is \emph{strictly dominant} if $u_\O^\sigma> u_\O^{\sigma'}$ for all $\sigma{\in} \Sigma$.
\end{definition}

\noindent
Hence a non-profitable strategy is never strictly dominant.

{\bf The case of $\prat$}. The previous definitions can be adapted to the case of $\prat$ as follows:

$$ u^0_\P = \sum_{c\in C} \prat(c,\eval) $$

\noindent
which encodes the initial utility of each restaurant, and

$$u^{\sigma}_\P =\sum_{c\in C}  \prat(c,\eval^\sigma) - \sum_{c\in C}\sigma(c)$$

\noindent
which encodes the utility change after the execution of a $\sigma$.
%
Finally, let the revenue of $\sigma$ be $\rev_\P(\sigma)=u^\sigma_\P - u^0_\P$.
If clear from the context, we use $\prat^\sigma(c)$ for $\prat(\eval^\sigma,c)$.


In order to determine the dominant strategies, we need to establish how the customers vote, how they are connected, and what the restaurant owner knows. In this paper we assume that the restaurant knows $\eval$, leaving the interesting case when $\eval$ is unknown to future work. We focus instead on the following cases: the restaurant knows the network, the restaurant knows the shape of the network but not the individuals' position, and the network is unknown. We analyse the effect of bribing strategies on $\prat$ in each such case. Notice how for the case of $\orat$ the cases collapse to the first. We also look at the special situation in which every customer is a voter. 

Given a set of such assumptions, we say that $\orat$ (or $\prat$) are {\em bribery-proof} under those assumptions if $\sigma^{0}$ is weakly dominant.


{\bf Discussion} 
Our model is built upon a number of simplifying assumptions which do not play a significant role in the results and could therefore be dispensed with:
(i) customers' ratings correspond to their propensity to go to the restaurant. (ii)
the restaurant utility equals the sum of all such propensities (iii) bribe $\sigma(c)$ affects evaluation $\eval(c)$ linearly. 
All these assumptions could be generalised by multiplicative factors, such as an average price $R$ paid at the restaurant, and a "customer price" $D_c$, such that $\eval^\sigma(c)=\eval(c)+\frac{\sigma(c)}{D_c}$.

\section{Bribes under $\orat$}\label{sec:objective}

In this section we look at bribing strategies under $\orat$, first focussing on the case where everyone expresses an opinion, then moving on to the more general case.
\subsection{All vote}

Let us now consider the case in which $V=C$. Recall that $B(\sigma)$ is the set of customers bribed by $\sigma$.
We say that two strategies $\sigma_1$ and $\sigma_2$ are \emph{disjoint} if $B(\sigma_1)\cup B(\sigma_2)=\emptyset$.
By direct calculation it follows that the revenue of disjoint strategies exhibits the following property:

\begin{lemma}\label{lem:linear}
If $V=C$ and $\sigma_1$ and $\sigma_2$ are two disjoint strategies, then $\rev_\O(\sigma_1 \circ \sigma_2)=\rev_\O(\sigma_1)+\rev_\O(\sigma_2)$.
\end{lemma}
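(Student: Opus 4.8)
The plan is to reduce $\rev_\O(\sigma)$ to a sum of per-customer contributions that is supported on the bribed set $B(\sigma)$, and then split that sum over the disjoint pieces $B(\sigma_1)$ and $B(\sigma_2)$. The hypothesis $V=C$ is exactly what makes this work, because it pins the normalising constant of $\orat$ at $|C|$ no matter which strategy is applied.

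First I would unfold the definitions. For an arbitrary strategy $\sigma$,
\[ \rev_\O(\sigma) = u_\O^\sigma - u^0_\O = |C|\,\orat(\eval^\sigma) - \sum_{c\in C}\sigma(c) - |C|\,\orat(\eval). \]
Since $V=C$, we have $|C|\,\orat(\eval) = \sum_{c\in C}\eval(c)$ and $|C|\,\orat(\eval^\sigma) = \sum_{c\in C}\eval^\sigma(c) = \sum_{c\in C}\min\{1,\eval(c)+\sigma(c)\}$, so
\[ \rev_\O(\sigma) = \sum_{c\in C}\bigl(\min\{1,\eval(c)+\sigma(c)\} - \eval(c) - \sigma(c)\bigr) = \sum_{c\in B(\sigma)} f_c\bigl(\sigma(c)\bigr), \]
where I write $f_c(x) := \min\{1,\eval(c)+x\} - \eval(c) - x$. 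The last equality holds because any $c\notin B(\sigma)$ has $\sigma(c)=0$ and $\eval(c)\in\V\subseteq[0,1]$, so its summand $\min\{1,\eval(c)\}-\eval(c)$ vanishes; hence the revenue depends only on the bribed customers.

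Next I would apply this identity three times. Writing $\sigma_1\circ\sigma_2$ for the pointwise sum $\sigma_1+\sigma_2$, disjointness of $B(\sigma_1)$ and $B(\sigma_2)$ means at every $c$ at most one of $\sigma_1(c),\sigma_2(c)$ is nonzero, so $(\sigma_1\circ\sigma_2)(c)$ equals $\sigma_1(c)$ on $B(\sigma_1)$, equals $\sigma_2(c)$ on $B(\sigma_2)$, and is $0$ elsewhere; in particular $B(\sigma_1\circ\sigma_2)$ is the disjoint union of $B(\sigma_1)$ and $B(\sigma_2)$. Splitting the sum accordingly,
\[ \rev_\O(\sigma_1\circ\sigma_2) = \sum_{c\in B(\sigma_1)} f_c\bigl(\sigma_1(c)\bigr) + \sum_{c\in B(\sigma_2)} f_c\bigl(\sigma_2(c)\bigr) = \rev_\O(\sigma_1) + \rev_\O(\sigma_2), \]
which is the claim.

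There is no real obstacle here — this is the "direct calculation" the statement promises — but the point worth stressing is the role of $V=C$: it keeps the denominator of $\orat$ equal to $|C|$ independently of $\sigma$, which is what lets the contributions of untouched voters cancel. Once non-voters are allowed, bribing one of them changes $|V^\sigma|$, the cancellation for untouched voters fails, and additivity breaks; this is why the lemma is confined to the all-vote case and the general case is treated separately. (One should also note that $\sigma_1\circ\sigma_2$ need not be budget-balanced even when $\sigma_1,\sigma_2$ are, so the identity above is really a statement about the revenue functional rather than about legal strategies.)
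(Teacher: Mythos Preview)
Your proof is correct and is exactly the ``direct calculation'' the paper alludes to; the paper does not spell out any further argument beyond that phrase, so your unfolding of $\rev_\O(\sigma)$ into the per-customer sum $\sum_{c\in B(\sigma)} f_c(\sigma(c))$ and the subsequent split over $B(\sigma_1)\sqcup B(\sigma_2)$ is precisely what is intended. Your closing remarks on why $V=C$ is essential and on budget balance are accurate and go slightly beyond what the paper states.
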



\noindent
We now show that bribing a single individual is not profitable.

\begin{lemma}\label{lem:atomic}
Let $\sigma$ be a bribing strategy, $V=C$ and $|B(\sigma)|=1$. Then, $\rev_\O(\sigma) \leq 0$, i.e., $\sigma$ is not profitable.
\end{lemma}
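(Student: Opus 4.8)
The plan is to exploit the fact that when $V=C$ the objective rating is just the normalised sum of all evaluations, so that $u^0_\O=\sum_{c\in C}\eval(c)$ and, for any strategy $\tau$, $u^\tau_\O=\sum_{c\in C}\eval^\tau(c)-\sum_{c\in C}\tau(c)$. Writing $\rev_\O(\sigma)=u^\sigma_\O-u^0_\O$ and rearranging, every customer whose evaluation is left untouched and who receives no money contributes $0$, so the whole expression collapses to the contribution of the single bribed customer.

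Concretely, let $c^\star$ be the unique element of $B(\sigma)$ and set $s=\sigma(c^\star)>0$. The first step is to record that $\eval^\sigma(c)=\eval(c)$ and $\sigma(c)=0$ for every $c\neq c^\star$, whence
\[
\rev_\O(\sigma)=\bigl(\eval^\sigma(c^\star)-\eval(c^\star)\bigr)-s=\min\{1,\eval(c^\star)+s\}-\eval(c^\star)-s .
\]
The second step is the elementary inequality $\min\{1,y\}\le y$ for all $y\ge 0$; applying it with $y=\eval(c^\star)+s\ge 0$ gives $\rev_\O(\sigma)\le 0$, which is the claim.

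There is essentially no obstacle: the only point deserving attention is the $\min$ in the definition of $\eval^\sigma$, and it is precisely this cap at $1$ that is responsible for any loss. Indeed the displayed computation shows $\rev_\O(\sigma)=0$ exactly when $\eval(c^\star)+s\le 1$ (the efficient case) and $\rev_\O(\sigma)<0$ otherwise, so bribing a single individual is at best a waste of money. (One could alternatively obtain the lemma by combining the additivity of Lemma~\ref{lem:linear} with the one-customer sub-case, but the direct calculation above is shorter.)
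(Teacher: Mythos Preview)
Your proof is correct and is essentially the same direct computation as the paper's own proof sketch: identify the single bribed customer, reduce $\rev_\O(\sigma)$ to $\min\{1,\eval(c^\star)+s\}-\eval(c^\star)-s$, and conclude via $\min\{1,y\}\le y$. If anything, your write-up is slightly more careful in making explicit that $u^0_\O=\sum_{c}\eval(c)$ when $V=C$ and in isolating the elementary inequality used at the end.
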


\begin{proof}[Proof sketch]
Let $\bar c$ be the only individual such that $\sigma(\bar c)\not = 0$.
By calculation, $\rev(\sigma) = u_\O^\sigma - u_\O^0= 
\orat^\sigma - \orat -\sum_c \sigma(c)  =
\min \{1, \eval(\bar c)+\sigma(\bar c) \} - \eval (\bar c) - \sigma (\bar c) \leq 0$. 
\end{proof}

\noindent
By combining the two lemmas above we are able to show that no strategy is profitable for bribing the $\orat$.

\begin{proposition}
If $V=C$, then no strategy is profitable.
\end{proposition}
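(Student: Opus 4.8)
The plan is to reduce an arbitrary strategy to a sum of single-agent bribes and then invoke the two lemmas. Fix a strategy $\sigma$. If $\sigma=\sigma^0$ then $\rev_\O(\sigma)=0$ and there is nothing to prove, so assume $B(\sigma)=\{c_1,\dots,c_k\}$ with $k\geq 1$. For each $i$, let $\sigma_i$ be the strategy that assigns $\sigma(c_i)$ to $c_i$ and $0$ to every other customer. Since all the values $\sigma(c_j)$ are non-negative and $\sum_{c\in C}\sigma(c)\leq u^0$, every partial sum $\sigma(c_1)+\dots+\sigma(c_j)$ is at most $u^0$; in particular each $\sigma_i$ is budget balanced, hence a legitimate strategy, and so is every partial composition $\sigma_1\circ\cdots\circ\sigma_j$.

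By construction the strategies $\sigma_1,\dots,\sigma_k$ are pairwise disjoint and $\sigma=\sigma_1\circ\cdots\circ\sigma_k$. Applying Lemma~\ref{lem:linear} inductively on $j$ — at each step $\sigma_1\circ\cdots\circ\sigma_{j-1}$ and $\sigma_j$ are disjoint, and both are strategies by the previous paragraph — yields $\rev_\O(\sigma)=\sum_{i=1}^{k}\rev_\O(\sigma_i)$.

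Finally, each $\sigma_i$ is a bribing strategy with $|B(\sigma_i)|=1$, and we are in the case $V=C$, so Lemma~\ref{lem:atomic} gives $\rev_\O(\sigma_i)\leq 0$ for every $i$. Summing these inequalities gives $\rev_\O(\sigma)\leq 0$, i.e., $\sigma$ is not profitable, which is exactly the claim.

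The only delicate points — and the closest thing to an obstacle here — are bookkeeping ones: one must check that each single-agent strategy and each partial composition respects the budget bound (so that Lemma~\ref{lem:linear}, stated only for a pair of strategies, can legitimately be chained $k-1$ times), and that $\sigma$ really is the disjoint composition of the $\sigma_i$. Both follow immediately from the non-negativity of the values $\sigma(c)$, so no real difficulty arises.
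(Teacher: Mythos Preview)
Your proof is correct and follows essentially the same approach as the paper's: decompose $\sigma$ into pairwise disjoint single-customer bribes, apply Lemma~\ref{lem:linear} to sum their revenues, and use Lemma~\ref{lem:atomic} to bound each term by zero. The paper's proof sketch is terser and omits the budget-balance bookkeeping you spell out, but the argument is the same.
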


\begin{proof}[Proof sketch]
Any bribing strategy $\sigma$ can be decomposed into $n$ pairwise disjoint strategies such that $\sigma=\sigma_{c_1}\circ\dots \circ \sigma_{c_n}$ and $|B(\sigma_{c_j})|=1$ for all $1\leq j\leq n$.
By applying Lemma~\ref{lem:linear} and Lemma~\ref{lem:atomic} we then obtain that $\rev_\O(\sigma)\leq 0$. 
\end{proof}

\noindent
From this it follows that $\sigma^{0}$ is weakly dominant and thus $\orat$ bribery-proof when all customers voted.

%
%
%
%
%
%
%

\subsection{Non-voters}

Let us now consider the case of $V\subset C$, i.e., when there is at least one customer who is not a voter. 
In this case Lemma~\ref{lem:linear} no longer holds, as shown in the following example.

\begin{example}\label{example:non-voters}
Let $C=\{A, B, C\}$, and let $\eval(A)= 0.5$, $\eval(B)=0.5$, and $\eval(C)=*$. 
The initial resources are $u^0=\orat \times 3=1.5$.
Let now $\sigma_1(A)=0.5$ and $\sigma_1(B)=\sigma_1(C)=0$, and let $\sigma_2(C)=0.5$ and $\sigma_2(A)=\sigma_2(B)=0$. 
Now $u_\O^{\sigma_1}=0.75\times 3-0.5=1.75$ and $u_\O^{\sigma_2}=0.5\times 3-0.5=1$, but $u_\O^{\sigma_1\circ \sigma_2}=0.\bar{6} \times 3 -1 = 1$.
\end{example}

The example (in particular $\sigma_1$) also shows that $\orat$ in this case is not bribery-proof. 

We now turn to characterise the set of undominated bribing strategies. We begin by showing that bribing a non-voter is always dominated. 
Let first ${\sigma}$ be a strategy such that \mbox{${\sigma}(\bar c) \neq 0 $} for some $\overline{c}\in C \setminus V$ and recall that $V^\sigma$ is the set of voters after execution of $\sigma$.
Let us define the $\overline{c}$-{\em greedy restriction} of ${\sigma}$ to be any strategy $\sigma^{-\bar c}$ such that:

\begin{itemize}

\item $V^{\sigma^{-\bar c} } = V^{\sigma} \setminus \{\overline{c}\}$, i.e., the greedy restriction eliminates $\bar c$ from the set of voters.

\item For each $c \in V^{\sigma} \setminus \overline{c}$, $\max(1, eval(c)+ {\sigma}(c)) = \max(1, eval(c)+ {\sigma^{-\bar c}}(c))$, i.e.,  the greedy restriction does not waste further resources.

\item If there exists $c \in V^{\sigma} \setminus \overline{c}$ such that $eval(c)+ {\sigma^{-\bar c}}(c)<1$ then $\sum_{c \in C}\sigma^{-\bar c}(c)=\sum_{c \in C}\sigma(c)$, i.e., the $\sigma^{-\bar c}$ redistributes $\sigma(\bar c)$ among the remaining voters.

\end{itemize}

\noindent
We now show that each strategy bribing a non-voter is strictly dominated by any of its greedy restrictions.

\begin{proposition}\label{prop:non-voters}
Let $V\not =C$, and $\bar{c}\in C\setminus V$. 
Then each strategy $\sigma$ with $\sigma(\overline{c}) \neq 0$ is strictly dominated by $\sigma^{-\bar c}$.
\end{proposition}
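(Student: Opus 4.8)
The plan is to write both utilities in the form $|C|\cdot(\text{average evaluation})-(\text{money spent})$ and to show that passing from $\sigma$ to $\sigma^{-\bar c}$ improves both terms. Set $m=|V^\sigma|$, $W=V^\sigma\setminus\{\bar c\}$, and $a=\eval^\sigma(\bar c)$; since bribes lie in $\V\subseteq[0,1]$ we have $a=\sigma(\bar c)\in(0,1]$, and since the non-empty set $V$ satisfies $V\subseteq V^\sigma$ while $\bar c\in V^\sigma\setminus V$ we get $m\geq 2$ and $|C|\geq m$. Writing $S_W=\sum_{c\in W}\eval^\sigma(c)$ and $T=\sum_{c\in C}\sigma(c)$, this yields $u^\sigma_\O=\frac{|C|}{m}(S_W+a)-T$.

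Next I would read off from the three defining clauses of a greedy restriction exactly what $\sigma^{-\bar c}$ does: it frees the amount $a$ that $\sigma$ spent on $\bar c$ and pours $\rho:=\min\{a,\ (m-1)-S_W\}$ of it into the voters of $W$, raising their evaluations one-for-one up to the cap $1$ and wasting nothing, while the remainder $a-\rho\geq 0$ stays unspent (which happens precisely when $W$ is filled completely). Hence $\sum_{c\in W}\eval^{\sigma^{-\bar c}}(c)=S_W+\rho$ and $\sum_{c\in C}\sigma^{-\bar c}(c)=T-a+\rho\leq T\leq u^0$, so $\sigma^{-\bar c}$ is a legitimate strategy and $u^{\sigma^{-\bar c}}_\O=\frac{|C|}{m-1}(S_W+\rho)-T+a-\rho$.

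It then remains to show that $D:=u^{\sigma^{-\bar c}}_\O-u^\sigma_\O=|C|\bigl(\tfrac{S_W+\rho}{m-1}-\tfrac{S_W+a}{m}\bigr)+(a-\rho)$ is strictly positive, which I would do by cases on $\rho$. If $\rho=a$ (the budget is the binding constraint) the last term vanishes and combining the two fractions over $m(m-1)$ gives $D=\frac{|C|\,(S_W+a)}{m(m-1)}>0$, since $a>0$. If $\rho=(m-1)-S_W<a$ then $W$ is entirely filled, so $\tfrac{S_W+\rho}{m-1}=1$ while $\tfrac{S_W+a}{m}=\orat(\eval^\sigma)\leq 1$, whence $D=|C|\bigl(1-\orat(\eval^\sigma)\bigr)+(a-\rho)\geq a-\rho>0$. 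Either way $u^{\sigma^{-\bar c}}_\O>u^\sigma_\O$, and since every greedy restriction yields the same values $S_W+\rho$ and $T-a+\rho$, the conclusion holds for each of them.

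The delicate part, I expect, is the second step rather than any of the inequalities: one has to check that a greedy restriction exists and is budget-feasible, that the "no waste" clause (phrased with $\max$) genuinely forbids creating fresh overflow so that redistributed money converts one-for-one into evaluation, and that the "redistribution" clause forces all of $a$ to be reused whenever room remains — so that $\sum_{c\in W}\eval^{\sigma^{-\bar c}}(c)$ and $\sum_{c\in C}\sigma^{-\bar c}(c)$ are pinned to the stated values no matter which voters of $W$ are topped up. One must also acknowledge that $\sigma$ may be inefficient on $W$ (some voter already pushed past $1$): that wasted money sits identically in $T$ and in $T-a+\rho$ and so cancels in $D$. Once this bookkeeping is settled, the case analysis on $\rho$ is routine.
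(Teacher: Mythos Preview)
Your proof is correct and follows essentially the same two-case split as the paper: either the freed amount $a$ is fully absorbed by $W$ (your $\rho=a$, the paper's ``total evaluation preserved'' case), or $W$ is filled to the cap (your $\rho=(m-1)-S_W$, the paper's $\orat^{\sigma^{-\bar c}}=1$ case). Your version is more explicit with the bookkeeping---in particular your treatment of possible inefficiency of $\sigma$ on $W$ and the verification that the clauses pin down $\sum_{c\in W}\eval^{\sigma^{-\bar c}}(c)$ and $\sum_c\sigma^{-\bar c}(c)$---but the underlying argument is the same.
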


\begin{proof}
Let $\sigma$ be a strategy with $\sigma(\overline{c}) \neq 0$ for some non-voter $\overline{c}$, and let $\sigma^{-\bar c}$ be one of its greedy restriction defined above.
\begin{small}
\begin{eqnarray*}
u^{\sigma^{-\bar c}}_\O-u^{\sigma}_\O = \\
  |C|(\orat^{\sigma^{-\bar c}} {-} \orat^{\sigma} ) {+} \sum_{c \in C}\sigma(c)
   {-} \sum_{c \in C} \sigma^{-\bar c}(c) = \\
   |C|(\frac{\sum_{c \in C} \eval^{\sigma^{-\bar c}}(c)}{|V|} - \frac{\sum_{c \in C} eval^\sigma(c)}{|V \cup \overline{c}|})+\\
    + (\sum_{c \in C}\sigma(c) - \sum_{c \in C}\sigma^{-\bar c}(c))
    \end{eqnarray*}
\end{small}
\noindent
Observe first that $\sigma^{-\bar c}$ is a redistribution, hence $\sum_{c}\sigma(c) - \sum_{c}\sigma^{-\bar c}(c)\geq 0$, i.e., the second addendum in the above equation is positive.
Consider now the case where there exists $c \in V^{\sigma} \setminus \overline{c}$ such that $eval(c)+ \sigma^{-\bar c}(c)<1$. Then by the definition of $\sigma^{-\bar c}$ we have that $\sum_{c \in V^{\sigma}} eval^{\sigma}(c) = \sum_{c \in V^{\sigma^{-\bar c}}} eval^{\sigma^{-\bar c}}(c) $, i.e., the greedy restriction preserves the overall evaluation.
%
By straightforward calculation this entails that $u^{\sigma^{-\bar c}}_\O-u^{\sigma}_\O>0$.
If no such $c$ exists, and therefore $\orat^{\sigma^{-\bar c}}=1$ we have that either $\orat^{\sigma}<1$ or, by the efficiency requirement and the fact that  $\sigma(\overline{c})\neq 0$, we have that $ \sum_{c \in C}\sigma(c)> \sum_{c \in C}\sigma^{-\bar c}(c)$.
%
In either cases we have that $u^{\sigma^{-\bar c}}_\O-u^{\sigma}_\O>0$.
\end{proof}


Let an  {\em $\O$-greedy strategy} be any efficient strategy that redistributes all the initial resources $u^0_\O$ among voters. Making use of the previous result, we are able to characterise the set of all dominant strategies for $\orat$.

\begin{proposition}\label{prop:orating-optimal}
Let $V\not =C$. A strategy is weakly dominant for $\orat$ if and only if it is an $\O$-greedy strategy.
\end{proposition}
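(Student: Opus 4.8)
The plan is to prove both directions of the biconditional, leaning on Proposition~\ref{prop:non-voters} to confine attention to strategies supported on voters, and then on a direct revenue computation for such strategies. Throughout, write $n=|C|$ and recall that, since $V\neq C$, we have $|V|<n$, so the multiplicative factor $n/|V|>1$ in the definition of $u_\O^\sigma$ is what makes redistribution strictly worthwhile.

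First I would handle the ``if'' direction. Let $\sigma$ be an $\O$-greedy strategy: it is efficient, supported on $V$, and $\sum_{c\in C}\sigma(c)=u^0_\O=n\cdot\orat(\eval)$. Since $\sigma$ is efficient and supported on $V$, we have $\eval^\sigma(c)=\eval(c)+\sigma(c)$ for $c\in V$ and $V^\sigma=V$, so $\sum_{c\in C}\eval^\sigma(c)=\sum_{c\in V}\eval(c)+\sum_{c\in C}\sigma(c)$. Hence
\[
u_\O^\sigma = \frac{n}{|V|}\Bigl(\sum_{c\in V}\eval(c)+u^0_\O\Bigr)-u^0_\O
= u^0_\O + \frac{n}{|V|}\,u^0_\O - u^0_\O = \frac{n}{|V|}\,u^0_\O,
\]
using $\frac{n}{|V|}\sum_{c\in V}\eval(c)=n\cdot\orat(\eval)=u^0_\O$. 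The key point is that this value depends only on $u^0_\O$, $n$, $|V|$ — it is the same for every $\O$-greedy strategy. It then remains to show no strategy beats it: by Proposition~\ref{prop:non-voters} any $\sigma'$ bribing a non-voter is strictly dominated by one of its greedy restrictions, which is supported on $V$; and for an arbitrary $\sigma'$ supported on $V$, replacing it by an efficient redistribution of the same (or more) total budget — capping each $\sigma'(c)$ so $\eval(c)+\sigma'(c)\le 1$ and moving any freed-up budget, up to a total of $u^0_\O$, onto voters not yet saturated — only increases $u_\O^{\sigma'}$, since each unit of budget placed on an unsaturated voter contributes $n/|V|-1>0$ net, while wasted budget above $1$ contributes $-1$. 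So $u_\O^{\sigma'}\le \frac{n}{|V|}u^0_\O=u_\O^\sigma$, giving weak dominance of $\sigma$.

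For the ``only if'' direction, suppose $\sigma$ is weakly dominant. Then $u_\O^\sigma\ge u_\O^{\sigma'}$ for the particular $\O$-greedy $\sigma'$, so $u_\O^\sigma\ge \frac{n}{|V|}u^0_\O$. If $\sigma$ bribed a non-voter it would be strictly dominated (Proposition~\ref{prop:non-voters}), contradicting weak dominance; so $\sigma$ is supported on $V$. If $\sigma$ were not efficient, some voter $c$ has $\eval(c)+\sigma(c)>1$, i.e.\ strictly positive wasted budget; reallocating that waste (or simply deleting it) strictly increases $u_\O^\sigma$ by the argument above, again contradicting weak dominance. So $\sigma$ is efficient and supported on $V$, whence $u_\O^\sigma=u^0_\O+(\frac{n}{|V|}-1)\sum_{c\in C}\sigma(c)$; since this must equal at least $\frac{n}{|V|}u^0_\O$ and $\sum_{c\in C}\sigma(c)\le u^0_\O$ by budget balance, we get $\sum_{c\in C}\sigma(c)=u^0_\O$. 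Thus $\sigma$ is efficient, supported on $V$, and exhausts the budget — an $\O$-greedy strategy.

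The main obstacle is the intermediate ``trimming'' lemma in the ``if'' direction: carefully showing that an arbitrary efficient-or-not strategy $\sigma'$ on any support can be transformed, without decreasing utility, into an $\O$-greedy one. This needs the observation that net contribution per budget unit is $n/|V|-1$ on an unsaturated voter, $-1$ on a saturated voter or non-voter (after accounting for Proposition~\ref{prop:non-voters}), and that $u^0_\O\le |V|\cdot 1 = |V|$ guarantees there is always enough room among the voters to absorb the full budget $u^0_\O$ efficiently — so a fully efficient, fully budget-exhausting redistribution onto $V$ always exists. Once that room argument is in place, the rest is the bookkeeping above.
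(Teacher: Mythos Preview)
Your overall approach matches the paper's proof sketch: reduce to voter-only strategies via Proposition~\ref{prop:non-voters}, reduce to efficient strategies by trimming waste, and then observe that each unit of budget placed efficiently on a voter yields net marginal revenue $\frac{n}{|V|}-1>0$, so spending as much as possible is optimal. Your explicit computation $u_\O^\sigma=\frac{n}{|V|}\,u^0_\O$ for a fully-spending $\O$-greedy strategy is a nice concrete version of the paper's ``all $\O$-greedy strategies are payoff-equivalent''.

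There is, however, a genuine gap in your ``room'' argument. You assert that $u^0_\O\le |V|$, and that this guarantees enough space among the voters to absorb the entire budget efficiently. Neither claim holds. First, $u^0_\O=|C|\cdot\orat(\eval)$ can exceed $|V|$: take $|C|=10$, $|V|=2$, both voters evaluating $0.5$; then $u^0_\O=5>2=|V|$. Second, even when $u^0_\O\le|V|$, the available room is $\sum_{c\in V}(1-\eval(c))=|V|-\sum_{c\in V}\eval(c)$, not $|V|$; with $|C|=3$, $|V|=2$, both voters at $0.5$, you have $u^0_\O=1.5$ but room only $1$. In such cases no efficient voter-only strategy can spend the full budget, so your formula $u_\O^\sigma=\frac{n}{|V|}u^0_\O$ and your ``only if'' equality $\sum_c\sigma(c)=u^0_\O$ both fail.

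The fix is to read $\O$-greedy as the paper's algorithm description indicates: distribute to voters until either the budget is exhausted \emph{or} all voters are saturated at $1$. Your argument then needs a small case split. In the saturating case every $\O$-greedy strategy has $\orat^\sigma=1$ and spends exactly $|V|-\sum_{c\in V}\eval(c)$, and weak dominance follows because any competing strategy either wastes budget above $1$, bribes a non-voter, or leaves some voter below $1$ (in which case moving budget there strictly helps). The ``only if'' direction is then that a weakly dominant strategy must be efficient, voter-only, and either budget-exhausting or voter-saturating, which is exactly the $\O$-greedy condition.
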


\begin{proof}[Proof sketch.]
For the right-to-left direction, first observe that all $\O$-greedy strategies are payoff-equivalent, and that a non-efficient strategy is always dominated by its efficient counterpart. 
By Proposition~\ref{prop:non-voters} we know that strategies bribing non-voters are dominated, and by straightforward calculations we obtain that in presence of non-voters it is always profitable to bribe as much as possible.
For the left-to-right direction, observe that a non-greedy strategy is either inefficient, or it bribes a non-voter, or does not bribe as much as possible. In either circumstance it is strictly dominated.
\end{proof}

While there may be cases in which the number of weakly dominant strategies under $\orat$ is exponential, all such strategies are revenue equivalent, and Proposition~\ref{prop:orating-optimal} gives us a polynomial algorithm to find one of them: starting from an evaluation vector $\eval$, distribute all available resources $u^0_\O$ to the voters, without exceeding the maximal evaluation of 1. By either exhausting the available budget or distributing it all, we are guaranteed the maximum gain by Proposition~\ref{prop:orating-optimal}.

\section{Bribes under $\prat$}\label{sec:personalised}


 

In this section we look at bribing strategies under $\prat$, against various knowledge conditions on the social network.
As for Section~\ref{sec:objective} we start by looking at the case where everyone votes and later on allowing non-voters. Before doing that, we introduce a useful graph-theoretic measure of influence.

\begin{definition}\label{def:weights}
The \emph{influence weight} of a customer $c\in C$ in a network $E$ and and a set of designed voters $V$ is defined as follows:
$$w_c^V=\sum_{k\in N(c)} \frac{1}{|N(k) \cap V|}$$
\end{definition}

\noindent
Recall that we assumed that every customer can see a voter, thus $w_c^V$ are well-defined for every $c$.
If $V=C$, i.e., when everybody voted, we let $w_c=w_c^C$. In this case, we obtain $w_c=\sum_{k\in N(c)} \frac{1}{\deg(k)}$, where $\deg(c)=|N(c)|$ is the \emph{degree} of $c$ in $E$.
When $V$ is defined by a bribing strategy $\sigma$, we write $w^\sigma_c=w^{V^\sigma}_c$.

Intuitively, each individual's rating influences the rating of each of its connections, with a factor that is inversely proportional to the number of second-level connections that have expressed an evaluation. We formalise this statement in the following lemma:

\begin{lemma}\label{lem:static}
The utility obtained by playing $\sigma$ with $\prat$ is 
$u^\sigma_\P=\sum_{c\in V^{\sigma}} w_c^\sigma \times \eval^\sigma(c) - \sum_{c \in C} \sigma(c)$.
\end{lemma}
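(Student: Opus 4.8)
The plan is to compute $u^\sigma_\P = \sum_{c\in C}\prat(c,\eval^\sigma) - \sum_{c\in C}\sigma(c)$ directly from the definition of $\prat$ and rearrange the double sum by swapping the order of summation. First I would write, for the post-bribe voter set $V^\sigma$,
\[
\sum_{c\in C}\prat(c,\eval^\sigma)=\sum_{c\in C}\ \avg_{k\in N(c)\cap V^\sigma}\eval^\sigma(k)=\sum_{c\in C}\ \sum_{k\in N(c)\cap V^\sigma}\frac{\eval^\sigma(k)}{|N(c)\cap V^\sigma|}.
\]
(The technical assumption that every customer sees a voter guarantees $N(c)\cap V^\sigma\neq\emptyset$, so every $\prat(c,\eval^\sigma)$ is defined and the denominators are nonzero; note also $V\subseteq V^\sigma$, so this holds after any bribe.) Since a non-voter $k\notin V^\sigma$ contributes nothing, the inner sum ranges effectively over $k\in V^\sigma$, and the pair $(c,k)$ appears iff $k\in N(c)$, i.e.\ iff $c\in N(k)$ (the graph $E$ is undirected). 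Swapping the two sums therefore gives
\[
\sum_{c\in C}\prat(c,\eval^\sigma)=\sum_{k\in V^\sigma}\eval^\sigma(k)\ \sum_{c\in N(k)}\frac{1}{|N(c)\cap V^\sigma|}=\sum_{k\in V^\sigma}\eval^\sigma(k)\,w^\sigma_k,
\]
where the last equality is exactly Definition~\ref{def:weights} applied with designated voter set $V^\sigma$ (recall the notational convention $w^\sigma_c=w^{V^\sigma}_c$). Subtracting $\sum_{c\in C}\sigma(c)$ from both sides yields the claimed formula, after renaming the summation index $k$ back to $c$.

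There is no real obstacle here; the only point requiring a moment's care is the bookkeeping of which index set the averages and weights are taken over. Concretely, one must be careful that the relevant voter set throughout is $V^\sigma$, not the original $V$ — bribing a non-voter enlarges the set of voters, changes the denominators $|N(c)\cap V^\sigma|$ in other customers' $\prat$ values, and hence changes the influence weights from $w^V_c$ to $w^\sigma_c$. Once the correct post-bribe voter set is fixed, the identity is just Fubini for finite sums together with the symmetry of the undirected edge relation, so it is a direct calculation. I would present it as the short two-line display above rather than grinding through any case analysis.
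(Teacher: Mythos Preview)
Your proposal is correct and follows essentially the same route as the paper's own proof: expand $\sum_{c\in C}\prat(c,\eval^\sigma)$ via the definition of the average, swap the order of the double sum using the symmetry of the undirected edge relation, and recognise the inner sum as the influence weight $w^\sigma_k$ from Definition~\ref{def:weights}. Your added remarks on $V\subseteq V^\sigma$ and the resulting well-definedness of the denominators are more explicit than what the paper states, but the argument is otherwise identical.
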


\begin{proof}[Proof]
By calculation:
\begin{eqnarray*}
u^\sigma_\P + \sum_{c\in  C}\sigma(c) = \sum_{c\in C}\prat^\sigma(c)= \sum_{c\in C} \avg_{k\in N(c) \cap V^{\sigma}} \eval^\sigma(k)=\\
=\sum_{c\in C} \big [\frac{1}{|N(c)\cap V^{\sigma}|} \sum_{k\in N(c)\cap V^{\sigma}} \eval^\sigma(k) \big ]=\\
=\sum_{k\in V^{\sigma}} \big [ \; \eval^\sigma(k) \times \sum_{k'\in N(k)} \frac{1}{|N(k') \cap V^{\sigma}|} \big ]=\\
=\sum_{c\in V^{\sigma}} w_c^\sigma \times \eval^\sigma(c)
\end{eqnarray*}

\vspace{-0,4cm}
\end{proof}

\subsection{All vote, known network}

We begin by studying the simplest case in which the restaurant knows the evaluation $\eval$, the network $E$ as well as the position of each customer on the network.
The following corollary is a straightforward consequence of Lemma~\ref{lem:static}:

\begin{corollary}\label{cor:linear}
Let $V=C$ and let $\sigma_1$ and $\sigma_2$ be two disjoint strategies, then $\rev_\P(\sigma_1 \circ \sigma_2)=\rev_\P(\sigma_1)+\rev_\P(\sigma_2)$.
\end{corollary}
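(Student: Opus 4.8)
The plan is to derive this corollary directly from Lemma~\ref{lem:static}, which already expresses $u^\sigma_\P$ as a weighted sum $\sum_{c\in V^\sigma} w_c^\sigma \cdot \eval^\sigma(c)$ minus the total bribe. Since we are in the case $V=C$, every customer is already a voter, so $V^\sigma=C$ for \emph{every} strategy $\sigma$, and in particular the influence weights $w_c^\sigma=w_c^{V^\sigma}=w_c^C=w_c$ do not depend on $\sigma$ at all. This is the crucial simplification: the weights are static, so the utility becomes an affine function of the evaluation vector, $u^\sigma_\P=\sum_{c\in C} w_c\,\eval^\sigma(c)-\sum_{c\in C}\sigma(c)$, with fixed coefficients.

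First I would record that $V^{\sigma_1}=V^{\sigma_2}=V^{\sigma_1\circ\sigma_2}=C$ and hence all three relevant weight vectors coincide with $(w_c)_{c\in C}$. Next I would use disjointness of $\sigma_1$ and $\sigma_2$: since $B(\sigma_1)\cap B(\sigma_2)=\emptyset$ (the paper writes $B(\sigma_1)\cup B(\sigma_2)=\emptyset$, but for composability one reads this as disjoint supports), for each customer $c$ at most one of $\sigma_1(c),\sigma_2(c)$ is nonzero, so $\eval^{\sigma_1\circ\sigma_2}(c)=\min\{1,\eval(c)+\sigma_1(c)+\sigma_2(c)\}$ equals $\eval^{\sigma_1}(c)$ when $c\in B(\sigma_1)$, equals $\eval^{\sigma_2}(c)$ when $c\in B(\sigma_2)$, and equals $\eval(c)$ otherwise. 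Consequently $\eval^{\sigma_1\circ\sigma_2}(c)-\eval(c)=(\eval^{\sigma_1}(c)-\eval(c))+(\eval^{\sigma_2}(c)-\eval(c))$ for every $c$, because on each customer at most one of the two bracketed terms is nonzero. Likewise $\sum_c(\sigma_1\circ\sigma_2)(c)=\sum_c\sigma_1(c)+\sum_c\sigma_2(c)$ by disjoint supports.

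Then I would simply compute $\rev_\P(\sigma_1\circ\sigma_2)=u^{\sigma_1\circ\sigma_2}_\P-u^0_\P=\sum_{c\in C} w_c\big(\eval^{\sigma_1\circ\sigma_2}(c)-\eval(c)\big)-\sum_{c\in C}(\sigma_1\circ\sigma_2)(c)$, substitute the two additive decompositions from the previous paragraph, and regroup the sum into the $\sigma_1$-part and the $\sigma_2$-part, each of which is exactly $\rev_\P(\sigma_i)=\sum_{c\in C} w_c(\eval^{\sigma_i}(c)-\eval(c))-\sum_{c\in C}\sigma_i(c)$. This yields $\rev_\P(\sigma_1\circ\sigma_2)=\rev_\P(\sigma_1)+\rev_\P(\sigma_2)$.

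I do not anticipate a real obstacle here; the content is essentially bookkeeping once Lemma~\ref{lem:static} is in hand. The one point that needs a line of care is the additivity of the $\min$-truncated evaluations, i.e.\ the identity $\eval^{\sigma_1\circ\sigma_2}(c)-\eval(c)=(\eval^{\sigma_1}(c)-\eval(c))+(\eval^{\sigma_2}(c)-\eval(c))$; this is immediate precisely because the supports are disjoint, so no customer is hit by both bribes and no interaction between the two truncations can occur. (By contrast, Example~\ref{example:non-voters} shows that without $V=C$ the weights $w_c^\sigma$ move as voters are added, which is exactly why the analogous additivity fails in general — this corollary is the $\prat$ counterpart of Lemma~\ref{lem:linear}.)
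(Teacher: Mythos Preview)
Your proposal is correct and follows exactly the approach the paper intends: the corollary is stated as a ``straightforward consequence of Lemma~\ref{lem:static}'', and your argument unpacks precisely that consequence, using $V=C$ to freeze the weights $w_c^\sigma=w_c$ and disjointness of supports to split the sums. You also correctly flag the typo in the definition of disjoint strategies (the intended condition is $B(\sigma_1)\cap B(\sigma_2)=\emptyset$).
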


\noindent
We are now able to show a precise characterisation of the revenue obtained by any efficient strategy $\sigma$: 

\begin{proposition}\label{prop:prating-voters}
Let $V=C$, let $E$ be a known network, and let  $\sigma$ be an efficient strategy. 
Then 
$\rev_\P(\sigma) = \sum_{c\in C} (w_c-1) \sigma(c)$.
\end{proposition}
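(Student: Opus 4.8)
The goal is to compute $\rev_\P(\sigma) = u^\sigma_\P - u^0_\P$ for an efficient strategy $\sigma$ when $V = C$. The key observation is that when $V = C$, bribing a customer does not change the voter set, so $V^\sigma = C$ and hence $w_c^\sigma = w_c^C = w_c$ for every $c$. This lets me apply Lemma~\ref{lem:static} with all the weights being the fixed quantities $w_c$, independent of $\sigma$.

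First I would write $u^\sigma_\P = \sum_{c \in C} w_c \, \eval^\sigma(c) - \sum_{c \in C}\sigma(c)$ using Lemma~\ref{lem:static} together with $V^\sigma = C$. Next, since $\sigma$ is efficient, by definition $\sigma(c) + \eval(c) \leq 1$ for all $c$, so the truncation in $\eval^\sigma(c) = \min\{1, \eval(c) + \sigma(c)\}$ never triggers, giving $\eval^\sigma(c) = \eval(c) + \sigma(c)$. Then I would also note that $u^0_\P = \sum_{c \in C} w_c \, \eval(c)$: this is just Lemma~\ref{lem:static} applied to the null strategy $\sigma^0$ (where $\eval^{\sigma^0} = \eval$, $V^{\sigma^0} = C$, and the spending term vanishes), or equivalently it follows by the same reindexing computation as in the proof of Lemma~\ref{lem:static}.

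Putting these together, $\rev_\P(\sigma) = \sum_{c\in C} w_c(\eval(c) + \sigma(c)) - \sum_{c\in C}\sigma(c) - \sum_{c\in C} w_c \, \eval(c) = \sum_{c\in C} w_c \, \sigma(c) - \sum_{c\in C}\sigma(c) = \sum_{c\in C}(w_c - 1)\sigma(c)$, which is the claimed identity. The computation is entirely routine; there is no real obstacle once Lemma~\ref{lem:static} is in hand. The only point requiring a moment's care is making explicit that efficiency is exactly what removes the $\min$ operator, and that $V=C$ is what keeps the influence weights $w_c$ constant before and after the bribe — if either assumption were dropped, the weights in the two terms $u^\sigma_\P$ and $u^0_\P$ would differ and the subtraction would not telescope so cleanly.
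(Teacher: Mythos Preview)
Your proposal is correct and follows essentially the same route as the paper: both apply Lemma~\ref{lem:static} to express $u^\sigma_\P$ and $u^0_\P$ as weighted sums, then subtract and invoke efficiency to drop the $\min$. You are simply a bit more explicit than the paper in noting that $V=C$ forces $V^\sigma=C$ and hence $w_c^\sigma=w_c$, which is exactly the point that makes the two weighted sums cancel.
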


\begin{proof}
By calculation, where Step (2) uses Lemma~\ref{lem:static}, 
and Step (4) uses the fact that $\sigma$ is efficient:

\begin{eqnarray}
\rev_\P(\sigma) = u^\sigma_\P - u^0_\P =\\
= [\sum_{c\in C} w_c \; \eval^\sigma_c  - \sum_{c\in C} \sigma(c) - \sum_{c\in C} w_c\; \eval(c)] =\\
= \sum_{c\in C} \big [ w_c\; [\min\{1,\eval(c)+\sigma(c)\} - \eval(c)] \big] 
- \sum_{c\in C}\sigma(c) \\= \sum_{c\in C} (w_c -1) \sigma(c).
\end{eqnarray}

\vspace{-0.4cm}

\end{proof}

Proposition~\ref{prop:prating-voters} tells us that the factors $w_c$ are crucial in determining the revenue of a given bribing strategy.
Bribing a customer $c$ is profitable whenever $w_c{>}1$ (provided its evaluation was not $1$ already), while bribing a customer $c$ with $w_c{\leq}1$  is at most as profitable as doing nothing, as can be seen in the example below.
Most importantly, it shows that $\prat$ is {\em not} bribery-proof when the restaurant knows both the network and the customers' evaluations. 

\begin{example}
Let $E$ be a four arms stars, and let $A$ be the individual in the centre. 
Assume each individual values the restaurant 0.5. We have that $w_A=2.2$ and $w_c=0.7$ for all $c$ different from $A$. 
Consider now two bribing strategies: $\sigma^A$ which bribes $A$ with $0.5$, and $\sigma^B$ which bribes a single individual $B\not= A$ with the same amount. 
What we obtain is that $\rev_\P(\sigma^A)=0.6$, while $\rev_\P(\sigma^B)=-0.15$.
\end{example}

Given a network $E$ and an evaluation vector $\eval$, let Algorithm \ref{algo:Pgreedy} define the \emph{$\P$-greedy bribing strategy}.


\begin{algorithm}[ht]
{
 \KwIn{Evaluation function $\eval$ and network $E$}
 \KwOut{A bribing strategy $\sigma^G_\P : C\to \text{\it Val}$}

\smallskip

\textit{Budget}=$u_\P^0$\\
$\sigma_\P^G(c)=0$ for all $c\in C$\\
Compute $w_c$ for all $c\in C$\\
Sort $c\in C$  in descending order $c_0,\dots, c_m$ based on $w_c$\\

\smallskip

\For{i=0,\dots,m} {

\If{ \textit{Budget}$\not = 0$} {
\If { $w_{c_i} >1$ } {

$\sigma^G_\P(c_i)=\min\{1-\eval(c_i), \text{\it Budget}\}$\\

\textit{Budget}=\textit{Budget}-$\sigma^G_\P(c_i)$
}
}

\smallskip

\Return $\sigma^G_\P$\
}
\medskip}
\caption{The $\P$-greedy bribing strategy $\sigma^G_\P$}
\label{algo:Pgreedy}
 \end{algorithm}

As a consequence of Proposition~\ref{prop:prating-voters} we obtain:

\begin{corollary}\label{prop:prating-greedy}
The $\P$-greedy bribing strategy defined in Algorithm~\ref{algo:Pgreedy}
is weakly dominant.
\end{corollary}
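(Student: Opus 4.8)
The plan is to combine the exact revenue formula of Proposition~\ref{prop:prating-voters} with a standard fractional-knapsack exchange argument; throughout we work under the standing assumptions of this subsection, namely $V=C$ and a fully known network $E$. Since $u^0_\P$ does not depend on the strategy played, $\sigma^G_\P$ is weakly dominant iff it maximises $\rev_\P$ over all of $\Sigma$, so it suffices to prove $\rev_\P(\sigma^G_\P)\geq\rev_\P(\sigma)$ for every $\sigma\in\Sigma$.

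First I would reduce the comparison to efficient strategies. Given an arbitrary $\sigma\in\Sigma$, put $\sigma'(c)=\min\{\sigma(c),\,1-\eval(c)\}$. Then $\eval^{\sigma'}(c)=\eval^{\sigma}(c)$ for all $c$, hence $\prat^{\sigma'}(c)=\prat^{\sigma}(c)$ for all $c$, while $\sum_{c}\sigma'(c)\leq\sum_{c}\sigma(c)\leq u^0_\P$; thus $\sigma'$ is an efficient, budget-balanced strategy with $u^{\sigma'}_\P\geq u^{\sigma}_\P$, i.e.\ $\rev_\P(\sigma')\geq\rev_\P(\sigma)$. Moreover $\sigma^G_\P$ is itself efficient (by construction $\sigma^G_\P(c_i)\leq 1-\eval(c_i)$) and budget-balanced (the variable \textit{Budget} starts at $u^0_\P$ and is never driven below $0$). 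Hence it is enough to show that $\sigma^G_\P$ maximises $\rev_\P$ among efficient strategies.

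For this I would invoke Proposition~\ref{prop:prating-voters}: for every efficient $\sigma$ we have $\rev_\P(\sigma)=\sum_{c\in C}(w_c-1)\,\sigma(c)$, to be maximised subject to $0\leq\sigma(c)\leq 1-\eval(c)$ for each $c$ and $\sum_{c}\sigma(c)\leq u^0_\P$. This is a fractional knapsack in which each unit of budget allocated to $c$ has unit cost and marginal value $w_c-1$, so the greedy rule that fills customers, skipping those with $w_c\leq 1$, in non-increasing order of $w_c$ up to their caps $1-\eval(c)$ until the budget is exhausted is optimal. The verification is a routine exchange argument: an optimal efficient $\sigma^{\ast}$ may be assumed to zero out every $\sigma^{\ast}(c)$ with $w_c\leq 1$ and to spend the maximal feasible total on the remaining customers; if the resulting allocation still disagreed with the greedy one in a way that changed its value, there would be $c_i,c_j$ with $w_{c_i}>w_{c_j}>1$, $c_i$ unsaturated and $\sigma^{\ast}(c_j)>0$, and shifting a small $\varepsilon$ of budget from $c_j$ to $c_i$ would strictly increase $\sum_c(w_c-1)\sigma^{\ast}(c)$, a contradiction; and if the greedy halts with leftover budget it is because every customer with $w_c>1$ is already saturated, so no efficient strategy can do better. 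Since Algorithm~\ref{algo:Pgreedy} implements precisely this rule, $\sigma^G_\P$ attains the maximum of $\rev_\P$ over efficient strategies and hence over all of $\Sigma$, which is the claim.

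The only mildly delicate point is the first step — checking that truncating a bribe keeps us inside the class of admissible strategies (budget balance, and membership of the paid values in $\text{\it Val}$) — but this is immediate here, since truncation only decreases the amount spent and leaves every induced evaluation, hence every $\prat$ value, unchanged. The remainder is the linear-objective exchange argument, which presents no real difficulty.
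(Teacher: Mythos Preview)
Your argument is correct and follows the same route the paper intends: the corollary is stated there merely ``as a consequence of Proposition~\ref{prop:prating-voters}'' with no further proof, and you supply exactly the missing details --- the reduction to efficient strategies by truncation and the fractional-knapsack exchange argument on the linear objective $\sum_c (w_c-1)\sigma(c)$. There is nothing to add.
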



As in the case of $\orat$, Corollary~\ref{prop:prating-greedy} has repercussions on the computational complexity of bribery: it shows that computing a weakly dominant strategy can be done in polynomial time. Notice how the most costly operation lies in the computation of the influence weights $w_c$, which can be performed only once, assuming the network is static.
Similar problems, such as recognising whether bribing a certain individual is profitable, or estimating whether individuals on a network can be bribed above a certain threshold, are also computable in polynomial time.

\subsection{All vote, unknown network}

We now move to study the more complex case of an unknown network. 
Surprisingly, we are able to show that no bribing strategy is profitable (in expectation), and hence $\prat$ is bribery-proof in this case.
Recall that we are still assuming that the restaurant knows $\eval$ and everybody voted.

We begin by assuming that the restaurant knows the structure of the network, but not the position of each participant. Formally, the restaurant knows $E$, but considers any permutation of the customers in $C$ over $E$ as possible.
Let us thus define the expected revenue of a strategy $\sigma$ over a given network $E$ as the average over all possible permutations of customers:
$\mathbb E[\rev_\P(\sigma)] = \sum_{} \frac{1}{n!} [u_\rho^\sigma - u_\rho^0],$
where we abuse notation by writing $u_\rho^\sigma$ as $u_\P^\sigma$ under permutation $\rho$ over the network $E$.
What we are able to show is that all strategies are at most as profitable as $\sigma^0$ in expected return:

\begin{proposition}\label{prop:prating-uncertain}
Let $V=C$, let the network structure of $E$ be known but not the relative positions of customers on $E$. Then $\mathbb E[\rev_\P(\sigma)]=0$ for all strategies $\sigma$.
\end{proposition}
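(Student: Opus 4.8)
The plan is to leverage Proposition~\ref{prop:prating-voters}, which tells us that for an efficient strategy $\sigma$ on a \emph{fixed} labelled network we have $\rev_\P(\sigma)=\sum_{c\in C}(w_c-1)\sigma(c)$. Since any inefficient strategy is payoff-dominated by its efficient truncation and a truncation can only decrease the amount spent, it suffices to prove the claim for efficient strategies; in fact it suffices to show $\mathbb E[\rev_\P(\sigma)]\le 0$, and by symmetry (replacing $\sigma$ by the all-zero strategy as the baseline) we will actually get equality from the following averaging computation. The key observation is that under a uniformly random permutation $\rho$ of the customers onto the vertices of $E$, the customer $c$ who pays $\sigma(c)$ lands on each vertex $v$ of $E$ with probability $1/n$, so the influence weight attached to $c$ becomes a random variable whose expectation is the \emph{average} influence weight over all vertices of the graph.

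So the first step is to write $\mathbb E[\rev_\P(\sigma)] = \frac{1}{n!}\sum_{\rho}\sum_{c\in C}(w_{\rho(c)}-1)\sigma(c)$, where $w_{\rho(c)}$ denotes the weight of the graph-position to which $\rho$ sends $c$. Exchanging the two sums gives $\sum_{c\in C}\sigma(c)\,\bigl(\frac{1}{n!}\sum_\rho w_{\rho(c)} - 1\bigr) = \sum_{c\in C}\sigma(c)\,(\bar w - 1)$, where $\bar w = \frac{1}{n}\sum_{v\in V(E)} w_v$ is the mean influence weight over the graph. The second step — and the real content — is the arithmetic fact that $\bar w = 1$, i.e. $\sum_{v} w_v = n$. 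This follows directly from Definition~\ref{def:weights} with $V=C$: $\sum_{v} w_v = \sum_{v}\sum_{k\in N(v)}\frac{1}{\deg(k)} = \sum_{k}\frac{1}{\deg(k)}\sum_{v\in N(k)} 1 = \sum_{k}\frac{\deg(k)}{\deg(k)} = \sum_k 1 = n$, using that $N$ is symmetric ($v\in N(k)\iff k\in N(v)$, which holds since $E$ is undirected and each $N(c)$ contains $c$). Plugging $\bar w=1$ back in yields $\mathbb E[\rev_\P(\sigma)] = \sum_{c}\sigma(c)\cdot 0 = 0$.

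The main obstacle is mostly a matter of care rather than difficulty: one must make sure the reduction to efficient strategies is sound in expectation (an inefficient $\sigma$ has $\mathbb{E}[\rev_\P(\sigma)]\le 0$ because its efficient truncation spends no more and the capped term $\min\{1,\eval(c)+\sigma(c)\}$ is unchanged, so $\rev_\P$ is pointwise no larger, and the efficient truncation has expected revenue $0$; hence the inequality, and if one wants exact equality for \emph{all} strategies one should restrict the statement's phrasing accordingly — in fact the proposition as stated is about all $\sigma$, so one should note that a non-efficient strategy that "wastes" money strictly on some permutation still averages to $\le 0$, and the clean equality $=0$ is really the statement for efficient strategies, which includes $\sigma^0$). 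The other point needing a line of justification is the interchange of the finite sums, which is trivially valid, and the double-counting identity $\sum_v w_v = n$, whose proof is the swap-of-summation shown above. Once $\bar w = 1$ is in hand the result is immediate.
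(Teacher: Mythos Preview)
Your proof is correct and follows essentially the same route as the paper: both apply Proposition~\ref{prop:prating-voters} per permutation and then use the identity $\sum_{c} w_c = n$ (which you derive explicitly by a swap of summation, while the paper merely asserts it) to conclude that the average influence weight equals~$1$. The only cosmetic difference is that the paper reduces to single-customer bribes and appeals to linearity, whereas you handle all customers at once by exchanging the two finite sums; your side remark that the exact equality $=0$ strictly requires efficiency (with inefficient $\sigma$ giving $\le 0$) is a valid refinement that the paper's sketch also elides.
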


\begin{proof}[Proof sketch.]
Let $|C|=n$. 
We show the result for any strategy $\sigma$ that bribes a single customer $\bar c$. 
The general statement follows from the linearity of $\mathbb E[\rev(\sigma)]$. Equation (5) uses Proposition~\ref{prop:prating-voters} to compute the revenue for each permutation~$\rho$ of customers $C$ on the network:
\begin{small}
\begin{eqnarray}
\mathbb E [\sigma] = \sum_{\rho} \frac{1}{n!}\;\;  (u^{\sigma}_{\rho} - u^0_\rho)  
=\sum_{\rho} \frac{1}{n!} (w_{\rho(\bar c)} - 1)\sigma(\bar c) = \\
=\sum_{c\in C} \frac{(n-1)!}{n!} (w_c-1) \sigma(\bar c) = \frac{(n-1)!}{n!} \sum_{c\in C} (w_c -1) = 0
\end{eqnarray}
\end{small}
The last line follows from the observation that $\sum_c w_c = |C|$ and hence $\sum_{c} (w_c -1) = 0$, by a consequence of Definition~\ref{def:weights} when everybody votes. 
\end{proof}

Hence, if we assume a uniform probability over all permutations of customers on the network, a straightforward consequence of Proposition~\ref{prop:prating-uncertain} concludes that it is not profitable (in expectation) to bribe customers.

\begin{corollary}
If $V=C$ and the network is unknown, then no bribing strategy for $\prat$ is profitable in expected return.
\end{corollary}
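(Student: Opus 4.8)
The plan is to derive this corollary directly from Proposition~\ref{prop:prating-uncertain}, which already establishes that $\mathbb E[\rev_\P(\sigma)]=0$ for \emph{every} strategy $\sigma$ when $V=C$ and the network structure is known but the positions of customers are not. The only gap to close is the move from "positions unknown but structure known" to "network completely unknown": under a prior that ranges over a family of possible networks, the expected revenue is an average (weighted by the prior) of the per-network expected revenues $\mathbb E_E[\rev_\P(\sigma)]$, each of which is $0$ by Proposition~\ref{prop:prating-uncertain}. Since a convex combination of zeros is zero, we get $\mathbb E[\rev_\P(\sigma)]=0$, and in particular no bribing strategy is profitable in expectation.

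Concretely, I would first fix a strategy $\sigma$ and, as in the proof of Proposition~\ref{prop:prating-uncertain}, reduce to the case where $\sigma$ bribes a single customer, invoking linearity of expectation to lift the conclusion to arbitrary $\sigma$. Then I would write the expectation as $\mathbb E[\rev_\P(\sigma)] = \sum_E \Pr(E)\, \mathbb E_E[\rev_\P(\sigma)]$, where the sum ranges over the networks in the support of the restaurant's prior and $\mathbb E_E$ denotes the expectation over the uniform distribution on permutations of $C$ on $E$. Applying Proposition~\ref{prop:prating-uncertain} to each term gives $\mathbb E_E[\rev_\P(\sigma)] = 0$, hence the whole sum vanishes. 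Finally, $\mathbb E[\rev_\P(\sigma)] = 0 \not> 0$ shows $\sigma$ is not profitable, and since $\sigma$ was arbitrary the claim follows.

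The only genuinely delicate point is making sure the statement "the network is unknown" is interpreted in a way compatible with the already-proved result: one must assume the restaurant's uncertainty over the unknown network is modelled by a probability distribution over a set of candidate networks, and that conditional on each such network every permutation of customers remains equiprobable (which is the natural symmetry assumption already in force in Proposition~\ref{prop:prating-uncertain}). Under any such prior the argument goes through verbatim. If instead one wanted the stronger reading where the restaurant has no probabilistic model at all, the appropriate formalization is a worst-case one, and the conclusion still holds because the revenue is $0$ in expectation for \emph{every} network individually; there is no network on which a positive expected revenue can be extracted. So the "obstacle" is essentially one of stating the hypothesis correctly rather than of proof technique: once the uncertainty is phrased as (a mixture over) permutation-symmetric priors, the corollary is an immediate consequence of Proposition~\ref{prop:prating-uncertain} together with the tower property of expectation.
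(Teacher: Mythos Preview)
Your proposal is correct and matches the paper's approach: the paper also treats the corollary as an immediate consequence of Proposition~\ref{prop:prating-uncertain}, noting only that under a uniform prior over permutations the expected revenue is zero for every network, hence zero overall. Your explicit use of the tower property and your discussion of how to interpret ``unknown network'' as a mixture of permutation-symmetric priors simply make precise what the paper leaves implicit.
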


\subsection{Non-voters, known network}

With $\prat$ it is possible to find a network where bribing a non-voter is profitable:


\begin{example}\label{example:star}
Consider 4 individuals $\{B,C,D,E\}$ connected only to a non-voter in the middle. 
Let $eval(j)=0.2$ for all $j$ but the center. 
We have $u^0_\P=1$. 
Let $A$ be the non-voter, and let $\sigma_1(A)=1$ and 0 otherwise. 
The utility of $\sigma_1$ is:
$$\prat^{\sigma_1}(A)+4 \prat^{\sigma_1}(j) -1 = 1.76$$

\noindent
All other strategies can be shown to be dominated by $\sigma_1$. 
Take for instance a strategy $\sigma_2$ such that $\sigma_2(B)=0.8$, $\sigma_2(C)=0.2$ and 0 otherwise. The utility of $\sigma_2$ is $u^{\sigma_2}_\P= 1.25$.
\end{example}

It is quite hard to obtain analytical results for strategies bribing non-voters, due to the non-linearity of the $\prat$ in this setting.  
We can however provide results in line with those of the previous section if we restrict to \emph{voter-only strategies}, i.e., strategies $\sigma$ such that $\sigma(c)=0$ for all $c\not\in V$.
In this case, a similar proof to Proposition~\ref{prop:prating-voters} shows the following:

\begin{proposition}
Let  $V\not = C$, $E$ be a known network, and $\sigma$ be an efficient bribing strategy such that $B(\sigma)\subseteq V$.
Then,
$\rev_\P(\sigma) = \sum_{c\in V} (w^V_c-1) \sigma(c)$.
\end{proposition}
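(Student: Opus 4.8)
The plan is to follow the proof of Proposition~\ref{prop:prating-voters} almost verbatim, the one genuinely new ingredient being a careful bookkeeping of the voter set. The crucial observation is that, since $B(\sigma)\subseteq V$, the strategy $\sigma$ neither turns a non-voter into a voter nor (as $\eval$ is unchanged on non-voters) silences any voter; hence $V^\sigma=V$. Consequently the influence weights are unaffected by $\sigma$, i.e. $w_c^\sigma=w_c^{V^\sigma}=w_c^V$ for every $c\in C$, and in particular the same weights govern the pre- and post-bribe utilities.

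With this in hand I would invoke Lemma~\ref{lem:static} twice. Applied to $\sigma$ it gives $u^\sigma_\P=\sum_{c\in V} w_c^V\,\eval^\sigma(c)-\sum_{c\in C}\sigma(c)$; applied to the null strategy $\sigma^0$ (for which $V^{\sigma^0}=V$ as well, and $\eval^{\sigma^0}=\eval$) it gives $u^0_\P=\sum_{c\in V} w_c^V\,\eval(c)$. Subtracting, $\rev_\P(\sigma)=u^\sigma_\P-u^0_\P=\sum_{c\in V} w_c^V\big(\eval^\sigma(c)-\eval(c)\big)-\sum_{c\in C}\sigma(c)$.

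Two simplifications then close the argument. First, efficiency of $\sigma$ means $\eval(c)+\sigma(c)\le 1$, so $\eval^\sigma(c)=\min\{1,\eval(c)+\sigma(c)\}=\eval(c)+\sigma(c)$ and hence $\eval^\sigma(c)-\eval(c)=\sigma(c)$ for all $c\in V$. Second, since $B(\sigma)\subseteq V$ we have $\sigma(c)=0$ for every $c\notin V$, so $\sum_{c\in C}\sigma(c)=\sum_{c\in V}\sigma(c)$. Combining the two, $\rev_\P(\sigma)=\sum_{c\in V} w_c^V\sigma(c)-\sum_{c\in V}\sigma(c)=\sum_{c\in V}(w_c^V-1)\sigma(c)$, as claimed.

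The only delicate point — and the reason the hypothesis $B(\sigma)\subseteq V$ cannot be dropped — is the identity $V^\sigma=V$ together with the resulting constancy of the weights $w_c^V$. Once a bribe reaches a non-voter, the denominators $|N(k)\cap V|$ in Definition~\ref{def:weights} change for its neighbours, the map $\sigma\mapsto\rev_\P(\sigma)$ ceases to be linear, and no such closed form can hold (cf.\ Example~\ref{example:star}). Everything else is exactly the routine computation already performed for Proposition~\ref{prop:prating-voters}, now carried out over $V$ rather than $C$.
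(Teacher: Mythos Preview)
Your proof is correct and follows exactly the route the paper intends: it merely states that ``a similar proof to Proposition~\ref{prop:prating-voters} shows'' the result, and your write-up is precisely that similar proof, with the one additional bookkeeping point---that $B(\sigma)\subseteq V$ forces $V^\sigma=V$ and hence leaves the weights $w_c^V$ unchanged---made explicit. Nothing is missing and nothing diverges from the paper's approach.
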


\noindent
The difference with the case of $V=C$ is that $w_c^V$ can be arbitrarily large in the presence of non-voters, such as in our Example~\ref{example:star}.


\subsection{Non-voters, unknown positions}

Unlike the case of $V=C$, in this case it is possible to define bribing strategies that are profitable (in expected return). 

\begin{example}\label{ex:triangle}
Let $C=\{A,B,C\}$, and the initial evaluation $\eval(A)=\eval(B)=0.2$ and $\eval(C)=*$. 
Assume that the structure of the network is known, but the position of the individuals is not. Let the three possible network positions (without counting the symmetries) be depicted in Figure~\ref{fig:triangle}. Let $\sigma(B)=0.2$ and $\sigma(A)=\sigma(C)=0$. In the first case:
{\small
\begin{eqnarray*}
\rev^1_P(\sigma){=}\prat(A)+...+\prat(C) - 0.2 - u_\P^0 =\\
=0.3 + 0.3 + 0.4 - 0.2 - 0.6 = 0.2
\end{eqnarray*}
}
\noindent
In the second case $\rev^2_\P(\sigma)=0$ while in the third:
\begin{eqnarray*}
\rev^3_P(\sigma)=0.4 +0.3 +0.2 - 0.2 -0.6 = 0.1
\end{eqnarray*}
\end{example}

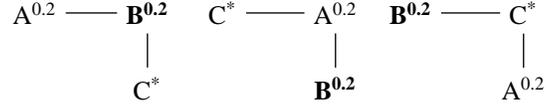
\begin{figure}[t]
\begin{center}

\begin{tikzpicture}[auto]
  \node          (a)  at (-2.5,0)  {A\textsuperscript{0.2}};
  \node          (b) at (-1,0)   {\textbf{B\textsuperscript{0.2}}};
  \node          (c) at (-1,-1) {C\textsuperscript{*}};

  \path[->]    (a) edge [-, line width=0.5pt]        node       {$$} (b)
			   (b) edge [-, line width=0.5pt]        node       {$$} (c);

  \node          (a1)  at (0,0)  {C\textsuperscript{*}};
  \node          (b1) at (1.5,0)   {A\textsuperscript{0.2}};
  \node          (c1) at (1.5,-1) {\textbf{B\textsuperscript{0.2}}};

  \path[->]    (a1) edge [-, line width=0.5pt]        node       {$$} (b1)
			   (b1) edge [-, line width=0.5pt]        node       {$$} (c1);

  \node          (a1)  at (2.5,0)  {\textbf{B\textsuperscript{0.2}}};
  \node          (b1) at (4,0)   {C\textsuperscript{*}};
  \node          (c1) at (4,-1) {A\textsuperscript{0.2}};

  \path[->]    (a1) edge [-, line width=0.5pt]        node       {$$} (b1)
			   (b1) edge [-, line width=0.5pt]        node       {$$} (c1);

\end{tikzpicture}
\caption{Customers permutations in Example~\ref{ex:triangle}.}\label{fig:triangle}

\end{center}
\end{figure}

Therefore, $\prat$ is not bribery-proof (in expectation) in the presence of non-voters when the network is unknown.
Interesting computational problems open up in this setting, such as identifying the networks that allow for profitable bribing strategies, and their expected revenue.

\section{Boundaries of bribery-proofness}\label{sec:strategyproof}

The previous sections have shown that having a network-based rating systems, where individuals are influenced by their peers, is not bribery-proof, even when the position of individuals in a given network is not known.
However bribing strategies have a different effect in the overall score. While the utility of $\orat$  is a sum of the {\em global} average of voters' evaluation, the utility of \prat is a sum of {\em local} averages of voters' evalution against the one of their peers.

Therefore a strategy bribing one voter affects everyone in the case of \orat, but it can be shown to have a limited effect in the case of \prat.

\begin{proposition}
Let $\sigma$ be an efficient strategy s.t. $|B(\sigma)|=1$, and let $\bar c$ be 
such that $\sigma(c)\not = 0$.
Then $\rev_\P(\sigma)<N(\bar c)$.

\end{proposition}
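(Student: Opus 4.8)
The plan is to exploit the \emph{locality} of $\prat$: a bribe that touches only $\bar c$ can change the personalised rating of a customer $c$ only when $\bar c\in N(c)$, i.e.\ (by symmetry of $E$) only when $c\in N(\bar c)$. So I would first write, using the definitions of $u^\sigma_\P$, $u^0_\P$ and the hypothesis $|B(\sigma)|=1$ (which gives $\sum_{c\in C}\sigma(c)=\sigma(\bar c)$),
\[
\rev_\P(\sigma) \;=\; u^\sigma_\P - u^0_\P \;=\; \sum_{c\in C}\bigl(\prat^\sigma(c)-\prat(c)\bigr) \;-\; \sigma(\bar c).
\]

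Next I would establish the locality claim: for every $c\notin N(\bar c)$ one has $\prat^\sigma(c)=\prat(c)$. Indeed, passing from $\eval$ to $\eval^\sigma$ only modifies the value assigned to $\bar c$ and, in case $\bar c\notin V$, enlarges the voter set to $V^\sigma=V\cup\{\bar c\}$; since $\bar c\notin N(c)$ we get $N(c)\cap V^\sigma=N(c)\cap V$ and $\eval^\sigma$ coincides with $\eval$ on this set, so the average defining $\prat^\sigma(c)$ is unchanged. Hence the sum above collapses to $N(\bar c)$, and I would then bound each of its summands crudely: every personalised rating is an average of numbers in $\text{\it Val}\subseteq[0,1]$, so $\prat^\sigma(c)\le 1$ and $\prat(c)\ge 0$, whence $\prat^\sigma(c)-\prat(c)\le 1$ for each of the $|N(\bar c)|$ members of $N(\bar c)$. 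This yields $\rev_\P(\sigma)\le |N(\bar c)|-\sigma(\bar c)$, and since $\sigma(\bar c)\in\text{\it Val}$ together with $\sigma(\bar c)\neq 0$ forces $\sigma(\bar c)>0$, I would conclude $\rev_\P(\sigma)<|N(\bar c)|$ (reading the statement's "$N(\bar c)$" as $|N(\bar c)|=\deg(\bar c)$).

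The step I expect to require the most care is the locality claim, and in particular checking that its two regimes are handled uniformly: when $\bar c\in V$ only the evaluation value of $\bar c$ moves and $V^\sigma=V$, whereas when $\bar c\notin V$ the voter set grows but still only neighbours of $\bar c$ can notice the difference. Everything else is routine arithmetic. Note finally that efficiency of $\sigma$ is not needed for this crude bound; it would enter only if one wanted the sharper estimate $\rev_\P(\sigma)=(w^V_{\bar c}-1)\,\sigma(\bar c)$ available in the voter-only case from the earlier propositions.
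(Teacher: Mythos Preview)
Your proof is correct and follows essentially the same approach as the paper: write $\rev_\P(\sigma)$ as $\sum_{c}(\prat^\sigma(c)-\prat(c))-\sigma(\bar c)$, collapse the sum to $N(\bar c)$ by locality, bound each difference by $1$, and use $\sigma(\bar c)>0$ for strictness. If anything, your treatment of the locality step (distinguishing $\bar c\in V$ from $\bar c\notin V$) is more careful than the paper's, and your observation that efficiency is not actually needed here is a valid one.
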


\begin{proof}
By calculation, we have that:
\begin{small} 
\begin{eqnarray*}
\rev_\P(\sigma) = 
\sum_{c\in C}  \prat^{\sigma}(c) {-} \sigma(\bar c) {-} \sum_{c\in C}  \prat(c)  = \\ 
\sum_{c' \in N(\bar c)} \prat^{\sigma}(c)  - \sigma(\bar c) - \sum_{c' \in N(\bar c)} \prat(c)  \leq \\
\leq 1 \times N(\bar c) - \sigma(\bar c) - \sum_{c' \in N(\bar c)} \prat(c) < N(\bar c)
\end{eqnarray*}
\vspace{-0.6cm}
\end{small}

\end{proof}

The previous result shows that increasing the number of individuals that are not connected to an agent that is bribed, even if these are non-voters, does not increase the revenue of the bribing strategy. 
This is not true when we use $\orat$.

\begin{proposition}
Let $\sigma$ be an efficient strategy.
The revenue $\rev_\O(\sigma)$ of $\sigma$ is monotonically increasing with the number of non-voters, and is unbounded.
\end{proposition}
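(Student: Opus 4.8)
The plan is to prove both claims at once by exhibiting, for every $m\geq 0$, a concrete instance with exactly $m$ non-voters together with an efficient strategy whose revenue equals $m/2$; since "monotonically increasing with the number of non-voters" and "unbounded" both concern how large $\rev_\O$ can be made as non-voters accumulate, a single such family settles both. (The quantifier in the statement is read existentially: a fixed $\sigma$ is a function on $C$ and changes domain when non-voters are added, so the meaningful claim is about the revenue attainable with a fixed action on the voters as the non-voter population grows.)

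First I would record the formula driving the effect. For an efficient strategy $\sigma$ that bribes only voters we have $V^\sigma=V$ and $\eval^\sigma(c)=\eval(c)+\sigma(c)$, so from $\rev_\O(\sigma)=|C|\big(\orat(\eval^\sigma)-\orat(\eval)\big)-\sum_{c\in C}\sigma(c)$ a one-line calculation gives $\rev_\O(\sigma)=\big(\tfrac{|C|}{|V|}-1\big)\sum_{c\in C}\sigma(c)=\tfrac{m}{|V|}\sum_{c\in C}\sigma(c)$, where $m=|C|-|V|$ is the number of non-voters. The point is that each unit of budget placed on a voter is amplified by the factor $m/|V|$, which is unbounded in $m$ — exactly the contrast with the previous proposition, where under $\prat$ bribing one voter only ever affects its neighbourhood.

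Then I would instantiate: let $V=\{v\}$ with $\eval(v)=\tfrac12$ and add $m$ non-voters, so that $|C|=1+m$ and the budget is $u^0=|C|\,\orat(\eval)=\tfrac{1+m}{2}$. Let $\sigma$ bribe $v$ by $\tfrac12$ and nobody else: this is efficient, bribes only a voter, and is budget-balanced since $\tfrac12\leq u^0$ for all $m\geq 0$; then $\eval^\sigma(v)=1$, $V^\sigma=\{v\}$, $\orat(\eval^\sigma)=1$, and $\rev_\O(\sigma)=(1+m)-\tfrac12-\tfrac{1+m}{2}=\tfrac{m}{2}$, which is strictly increasing in $m$ and diverges. (For the statement about the \emph{optimal} revenue, Proposition~\ref{prop:orating-optimal} identifies the optimum with an $\O$-greedy strategy, whose revenue is no smaller, so the conclusion follows a fortiori.) The computation is routine; the only thing requiring care is the budget-balance constraint $\sum_{c}\sigma(c)\leq u^0$, and the key observation is that it is self-repairing, since $u^0=|C|\,\orat(\eval)$ itself scales with $|C|$: adding non-voters simultaneously creates the amplification factor and supplies the resources to exploit it. I would regard choosing a family where this constraint is never binding (a single voter at evaluation $\tfrac12$) as the one modelling choice that keeps the argument clean, rather than a real obstacle.
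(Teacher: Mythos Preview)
Your argument is correct and rests on the same identity the paper uses: for an efficient $\sigma$ the revenue factors as $\big(\tfrac{|C|}{|V^\sigma|}-1\big)$ times a positive quantity, so it grows without bound as non-voters are added with $V^\sigma$ fixed. The paper simply writes down this formula for an arbitrary efficient $\sigma$ and reads the claim universally (for each such $\sigma$, the revenue increases with $|C|$); you instead restrict to voter-only strategies, derive the cleaner $\rev_\O(\sigma)=\tfrac{m}{|V|}\sum_c\sigma(c)$, read the statement existentially, and exhibit the explicit family $\rev_\O=m/2$. Your treatment is more careful on one point the paper glosses over---the budget-balance constraint $\sum_c\sigma(c)\le u^0$, and your observation that $u^0$ itself scales with $|C|$---while the paper's formulation nominally covers strategies that also bribe non-voters. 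Since Proposition~\ref{prop:non-voters} already shows those are dominated, the restriction costs nothing for the intended contrast with the $\prat$ bound.
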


\begin{proof}
It follows from our definitions that:
$$ \rev_\O(\sigma)=  (\frac{|C|}{|V^\sigma|} - 1) \big [ \sum_{c\in C} \eval(c)+\sigma(c) \big ]$$
The above figure is unbounded and monotonically increasing in the number of non-voters, which can be obtained by increasing $C$ keeping $V^\sigma$ fixed.
\end{proof}

So while $\prat$ and $\orat$ are not bribery-proof in general, it turns out that the impact of the two in the overall network are significantly different. 
In particular, under realistic assumptions such as a very large proportion of non-voters and with participants having a few connections, bribing under $\orat$ is increasingly rewarding, while under $\prat$ this is no longer the case.

\section{Conclusive remarks}\label{sec:conclusions}


We introduced $\prat$, a network-based rating system which generalises the commonly used $\orat$, and analysed their resistance to external bribery under various
 conditions.
The main take-home message of our contribution can be summarised in one point, deriving from our main results:

$\prat$ and $\orat$ are not bribery-proof in general. 
However, if we assume that a service provider has a cost for bribing an individual, there are situations in which $\prat$ is fully bribery proof, while $\orat$ is not.
For instance, if the cost of bribing an individual $c$ is at least $N(c)$ then $\prat$ is bribery-proof.
As observed previously, this is not necessarily true for $\orat$. In particular, if we assume the presence of unreachable individuals the difference is more significant. As shown, for $\prat$ we need to bribe individuals with $w_c>1$. 
With $\orat$ is sufficient to find one voter who accepts a bribe.


There is a number of avenues open to future research investigation. 
The most important ones include the case of partially known customers' evaluation, and the study of ratings of multiple restaurants, where the probability of a customer choosing a restaurant determines his or her probability not to choose the others. 

\section*{Acknowledgments}
Umberto Grandi acknowledges the support of the Labex CIMI project ``Social Choice on Networks'' (ANR-11-LABX-0040-CIMI). Paolo Turrini the support of Imperial College London for the Junior Research Fellowship ”Designing negotiation spaces for collective decision-making” (DoC-AI1048). This work has also been partly supported by COST Action IC1205 on Computational Social Choice. 

\bibliographystyle{named}
\bibliography{pr}

\end{document}